\newtheorem{theorem}{Theorem}[section]  
\newtheorem{corollary}[theorem]{Corollary}
\newtheorem{proposition}[theorem]{Proposition}
\theoremstyle{definition}
\theoremstyle{remark}
\newtheorem{remark}[theorem]{Remark}
\title[A further generalization of the colourful Carath\'eodory theorem]{A further generalization of the colourful\\ Carath\'eodory theorem}
\date{March 5th, 2014}
\author{Fr\'ed\'eric Meunier}
\address{Universit\'e Paris Est, CERMICS, 6-8 avenue Blaise Pascal, Cit\'e Descartes, 77455 Marne-la-Vall\'ee, Cedex 2, France}
\email{frederic.meunier@cermics.enpc.fr}
\author{Antoine Deza}
\address{Advanced Optimization Laboratory, Department of Computing and Software, 
McMaster University, Hamilton, Ontario, Canada, 
\emph{and} Equipe Combinatoire et Optimisation, Universit\'e Pierre et Marie Curie, Paris, France}
\email{deza@mcmaster.ca}
\subjclass[2000]{52C45, 52A35}
\keywords{Colourful {\cara} theorem, colourful simplicial depth, discrete geometry}
\def\bara{B\'ar\'any}
\def\cara{Carath\'eodory}
\def\R{\mathbb{R}}
\def\vecr{\mathbf{r}}
\def\S{\mathbf{S}}
\def\C{\mathcal{C}}
\def\F{F}
\def\Sph{\mathbb{S}}
\def\conv{\operatorname{conv}}
\def\zero{{\bf 0}}
\def\aff{\operatorname{aff}}
\begin{document}

\begin{abstract}
Given $d+1$ sets, or colours, $\S_1, \S_2,\ldots,\S_{d+1}$ of points in $\R^d$, a {\em colourful} set is a set $S\subseteq\bigcup_i\S_i$ such that $|S\cap\S_i|\leq 1$ for $i=1,\ldots,d+1$. The convex hull of a colourful set $S$ is called a {\em colourful simplex}. {\bara}'s colourful {\cara} theorem asserts that if the origin {\zero} is contained in the convex hull of $\S_i$ for $i=1,\ldots,d+1$, then there exists a colourful simplex containing {\zero}. The sufficient condition for the existence of a colourful simplex containing {\zero} was generalized to {\zero} being contained in the convex hull of  $\S_i\cup\S_j$ for $1\leq i< j \leq d+1$ by Arocha et al. and by Holmsen et al. We further generalize the sufficient condition and obtain new colourful {\cara} theorems. We also give an algorithm to find a colourful simplex containing {\zero} under the generalized condition. In the plane an alternative, and more general, proof using graphs is given. In addition, we observe that any condition implying the existence of a colourful simplex containing {\zero} actually implies the existence of $\min_i|\S_i|$ such simplices.
\end{abstract}
\maketitle

\section{Colourful {\cara} theorems}
Given $d+1$ sets, or colours, $\S_1, \S_2,\ldots,\S_{d+1}$ of points in $\R^d$, we call a set of points drawn from the $\S_i$'s {\em colourful} if it contains at most one point from each $\S_i$. A {\em colourful simplex} is the convex hull of a colourful set $S$, and a colourful set of $d$ points which misses $\S_i$ is called an $\widehat{i}$-{\it transversal}. The colourful {\cara} Theorem~\ref{thmB} by {\bara} provides a sufficient condition for the existence of a colourful simplex containing the origin {\zero}.
\begin{theorem}[\cite{Bar82}] \label{thmB}
Let $\S_1,\S_2,\ldots,\S_{d+1}$ be finite sets of points in $\R^d$ such that $\zero\in\conv(\S_i)$ for $i=1\ldots d+1$. Then there exists a set $S\subseteq\bigcup_i\S_i$ such that $|S\cap\S_i|=1$ for $i=1,\ldots,d+1$ and $\zero\in\conv(S)$.
\end{theorem}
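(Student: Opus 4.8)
The plan is to use \bara's nearest-point argument. Since each \S_i is finite, there are only finitely many colourful sets $S$ with $|S\cap\S_i|=1$ for all $i$, hence finitely many colourful simplices. Among them I would pick one, $\conv(S)$ with $S=\{v_1,\ldots,v_{d+1}\}$ and $v_i\in\S_i$, that minimizes the Euclidean distance from \zero to the simplex, and let $x$ be the point of $\conv(S)$ nearest to \zero. Everything reduces to proving $x=\zero$; so I would suppose instead $x\neq\zero$ and aim for a contradiction with minimality.

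The first step is to locate a colour that $x$ does not need. I claim that $x$ is a convex combination of at most $d$ of the $v_i$, so that $x\in\conv(S')$ for some $S'=S\setminus\{v_j\}$ omitting colour $j$. If $v_1,\ldots,v_{d+1}$ are affinely dependent this is immediate from \cara's theorem applied inside $\aff(S)$, whose dimension is at most $d-1$. If they are affinely independent then $\conv(S)$ is a full-dimensional simplex, and since the nearest point of a full-dimensional convex body to \zero can lie in its interior only when \zero is itself interior (forcing $x=\zero$), the assumption $x\neq\zero$ puts $x$ on a proper face, again a convex hull of at most $d$ vertices. Either way some colour $j$ is omitted and $x\in\conv(S')$.

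The second step is the pivot. Consider the hyperplane $H=\{y:\langle y,x\rangle=\|x\|^2\}$ through $x$ orthogonal to $x$; since $x$ is the nearest point, $\conv(S)$ lies in $\{\langle y,x\rangle\ge\|x\|^2\}$ while $\langle\zero,x\rangle=0<\|x\|^2$. Using the hypothesis $\zero\in\conv(\S_j)$, write \zero as a convex combination of points of $\S_j$ and take the inner product with $x$: since a convex combination of the numbers $\langle v,x\rangle$ with $v\in\S_j$ equals $0<\|x\|^2$, some $v_j'\in\S_j$ satisfies $\langle v_j',x\rangle<\|x\|^2$, i.e.\ lies strictly on \zero's side of $H$. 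Form the colourful set $S''=S'\cup\{v_j'\}$. Then $x\in\conv(S')\subseteq\conv(S'')$ and $v_j'\in\conv(S'')$, so the segment $[x,v_j']$ lies in $\conv(S'')$; and because $\tfrac{d}{dt}\|(1-t)x+tv_j'\|^2\big|_{t=0}=2(\langle v_j',x\rangle-\|x\|^2)<0$, moving from $x$ toward $v_j'$ strictly decreases the distance to \zero. Hence $\conv(S'')$ is strictly closer to \zero than $\conv(S)$, contradicting minimality. Therefore $x=\zero$ and $\zero\in\conv(S)$.

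The main obstacle is the first step, namely guaranteeing that the nearest point omits a colour even when the chosen colourful simplex is degenerate; the pivot in the second step is then routine, the only thing to watch being that the replacement vertex can be taken \emph{strictly} on \zero's side, so that the decrease in distance is strict rather than merely weak. One could try to sidestep the degeneracy by a general-position perturbation, but perturbing risks destroying the hypotheses $\zero\in\conv(\S_i)$, so I prefer the direct \cara-based reduction above.
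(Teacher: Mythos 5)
Your proof is correct: the paper does not prove Theorem~\ref{thmB} itself (it is quoted from \cite{Bar82}), and your argument is precisely B\'ar\'any's original one --- minimize the distance from $\zero$ over the finitely many colourful simplices, observe that a nonzero nearest point $x$ lies in the convex hull of at most $d$ of the vertices (by {\cara} in the degenerate case, by the boundary argument otherwise), and then use $\zero\in\conv(\S_j)$ to find a replacement vertex $v_j'$ with $\langle v_j',x\rangle<\|x\|^2$, which strictly decreases the distance and contradicts minimality. All steps, including the variational inequality $\conv(S)\subseteq\{y:\langle y,x\rangle\ge\|x\|^2\}$ and the strictness of the pivot, are handled correctly.
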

\noindent
Theorem~\ref{thmB} was generalized by Arocha et al.~\cite{AB+09} and by Holmsen et al.~\cite{HPT08} providing a more general sufficient condition for the existence of a colourful simplex containing the origin {\zero}.
\begin{theorem}[\cite{AB+09,HPT08}] \label{thmB+}
Let $\S_1,\S_2,\ldots,\S_{d+1}$ be finite sets of points in $\R^d$ such that $\zero\in\conv(\S_i\cup\S_j)$ for $1\leq i<j\leq d+1$. Then there exists a set $S\subseteq\bigcup_i\S_i$ such that $|S\cap\S_i|=1$ for $i=1,\ldots,d+1$ and $\zero\in\conv(S)$.
\end{theorem}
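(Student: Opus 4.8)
The plan is to adapt \bara's minimal-simplex argument. First I would reduce to the case where the points are in general position (a standard perturbation plus limiting argument recovers the general statement, the conclusion being closed). Among all colourful sets $S$ with $|S\cap\S_i|=1$ for every $i$, I would choose one minimizing the distance from \zero\ to $T:=\conv(S)$; let $p$ denote the point of $T$ nearest \zero, and suppose for contradiction that $\zero\notin T$, so $p\neq\zero$. Writing $H:=\{x:\langle p,x\rangle=\|p\|^2\}$, the nearest-point property gives $\langle p,x\rangle\ge\|p\|^2$ for all $x\in T$, while $\langle p,\zero\rangle=0<\|p\|^2$; thus $H$ separates $T$ from \zero, and I shall call a point $v$ \emph{low} if $\langle p,v\rangle<\|p\|^2$, i.e.\ if it lies strictly on the side of $H$ containing \zero.

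The engine of the proof is an exchange step. The point $p$ lies in the relative interior of the minimal face $F$ of $T$; since $F\subseteq H$, every vertex of $F$ lies on $H$, and writing $I$ for the set of colours of the vertices of $F$ one has $p=\sum_{k\in I}\lambda_k s_k$ with all $\lambda_k>0$, while $|I|\le d$ because $F$ is a proper face. The key observation is: if some colour $k\notin I$ contains a low point $s_k'$, then replacing the (zero-weight) colour-$k$ vertex of $S$ by $s_k'$ yields a colourful set whose simplex still contains $p$ and also contains $s_k'$; since $\langle p,s_k'-p\rangle=\langle p,s_k'\rangle-\|p\|^2<0$, moving from $p$ toward $s_k'$ strictly decreases the distance to \zero, contradicting minimality. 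So it suffices to locate a low point in a missing colour.

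When at least two colours are missing, i.e.\ $|I|\le d-1$, this is immediate: picking two missing colours $i,j$ and expanding $\zero\in\conv(\S_i\cup\S_j)$ as a convex combination, the identity $0=\langle p,\zero\rangle=\sum_v c_v\langle p,v\rangle$ forces some $v\in\S_i\cup\S_j$ with $\langle p,v\rangle\le 0<\|p\|^2$, a low point in a missing colour. The difficulty, and the part I expect to be the crux, is the generic facet case $|I|=d$, where exactly one colour $i_0$ is missing and it may happen that $\S_{i_0}$ contains no low point. Here the pairwise hypothesis still bites: applying the same computation to $\zero\in\conv(\S_{i_0}\cup\S_j)$ for a used colour $j$, and using that every point of $\S_{i_0}$ then satisfies $\langle p,\cdot\rangle\ge\|p\|^2$, one finds a low point $s_j'$ in each used colour $j$. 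The obstruction is that swapping a \emph{used} colour's vertex generally expels $p$ from the new simplex, so the exchange step does not apply verbatim.

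To finish the facet case I would pursue one of two routes. The first is a refined exchange: after swapping $s_j\to s_j'$ the facet of the new simplex opposite $i_0$ now dips below level $\|p\|^2$, so one can argue a net decrease in distance, controlling the quadratic error by a secondary lexicographic minimization (first the distance, then, say, the number of vertices lying on $H$) to rule out cycling. The second, and to my mind cleaner, route is induction on $d$: the $d$ low points produced above, together with the pairwise conditions restricted to the separating hyperplane $H\cong\R^{d-1}$, should set up an application of the theorem in dimension $d-1$ that returns a colourful configuration certifying a strictly closer simplex. Verifying that the projected data genuinely satisfy the pairwise hypothesis in $H$ is the delicate point on which the whole argument turns.
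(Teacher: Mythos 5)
There is a genuine gap, and you have put your finger on exactly where it is: the facet case $|I|=d$, where the unique missing colour $i_0$ may contain no low point. Everything up to that point is fine (the reduction to general position, the exchange step for a missing colour, and the observation that the pairwise hypothesis then forces a low point in every \emph{used} colour), but neither of your two proposed routes closes the case. The central claim of your first route --- that after swapping a used colour's vertex $s_j$ for a low point $s_j'$ of the same colour ``one can argue a net decrease in distance'' --- is false in general. The nearest point $p$ is expelled from the new simplex, and the vertices that remain on $H$ are only constrained to satisfy $\langle p,\cdot\rangle\ge\|p\|^2$; they may lie arbitrarily far from $p$ in directions parallel to $H$. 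If, say, $s_j'$ and all the retained vertices share a large common component in some direction orthogonal to $p$, every point of the new simplex inherits that component and the distance to $\zero$ strictly \emph{increases}. A lexicographic tiebreak cannot repair a step that is not monotone in the primary objective. Your second route (induction on $d$ inside $H$) is only sketched, and the step you yourself flag as delicate --- that the data projected into $H$ again satisfy the pairwise hypothesis with respect to $p$ --- is precisely what fails: nothing in the hypotheses controls where the projections of $\S_i\cup\S_j$ land relative to $p$.

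This is not a repairable local difficulty but the reason the theorem is genuinely harder than Theorem~\ref{thmB}: \bara's distance-minimizing exchange does not survive the weakening of the hypothesis, and all known proofs of Theorem~\ref{thmB+} replace it by a global topological or parity argument. The proof in this paper (via Theorem~\ref{thm2}) proceeds quite differently: one takes the colourful $(d-1)$-simplex $\sigma$ \emph{first} hit by a generic ray $\vecr$ from $\zero$, uses the hypothesis to produce either a point of the missing colour on the near side of $\aff(V(\sigma))$ (which immediately yields a colourful simplex containing $\zero$) or a second, vertex-disjoint transversal $T'$ on the near side, and then applies a parity count (the Octahedron Lemma) to the pseudomanifold spanned by $V(\sigma)$ and $T'$: a generic ray meets it an odd number of times, so the ray opposite to any point of the missing colour meets some colourful $(d-1)$-simplex, producing the desired simplex. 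To complete your write-up you would need to import an argument of this global type for the facet case; the local exchange machinery alone will not do it.
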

\noindent
We further generalize the sufficient condition for the existence of a colourful simplex containing the origin. Moreover, the proof, given in Section~\ref{Sproofsub}, provides an alternative and geometric proof for Theorem~\ref{thmB+}. Let $\overrightarrow{x_k\zero}$ denote the ray originating from $x_k$ towards {\zero}.
\begin{theorem}\label{thm1}
Let $\S_1,\S_2,\ldots,\S_{d+1}$ be finite sets of points in $\R^d$. Assume that, for each $1\leq i< j \leq d+1$, there exists $k\notin\{i,j\}$ such that, for all $x_k\in\S_k$, the convex hull of $\:\S_i\cup\S_j$ intersects the ray $\overrightarrow{x_k\zero}$ in a point distinct from $x_k$. Then there exists a set $S\subseteq\bigcup_i\S_i$ such that $|S\cap\S_i|=1$ for $i=1,\ldots,d+1$ and $\zero\in\conv(S)$.
\end{theorem}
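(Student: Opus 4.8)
My plan is to argue by contradiction using a distance-minimising colourful simplex, in the spirit of {\bara}'s original proof, with the ray hypothesis playing the role that $\zero\in\conv(\S_i)$ plays there. Assume no colourful set $S$ with one point of each colour satisfies $\zero\in\conv(S)$, and among all such colourful simplices choose $T=\conv\{y_1,\dots,y_{d+1}\}$ (with $y_\ell\in\S_\ell$) minimising the Euclidean distance to $\zero$. Let $p\neq\zero$ be the point of $T$ nearest to $\zero$, set $\nu=p/|p|$, and let $H=\{x:\langle x,\nu\rangle=|p|\}$ be the supporting hyperplane at $p$, so that $\zero$ lies in the open half-space $H^-=\{\langle x,\nu\rangle<|p|\}$ while $T\subseteq\overline{H^+}=\{x:\langle x,\nu\rangle\ge|p|\}$. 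Let $F$ be the minimal face of $T$ with $p\in\operatorname{relint}F$; then $F\subseteq H$. Write $I$ for the set of colours occurring among the vertices of $F$ and $M=\{1,\dots,d+1\}\setminus I$ for the missing colours; since $p\ne\zero$, $F$ is a proper face and $M\neq\emptyset$.

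The first step is the elementary exchange observation: for every missing colour $m\in M$ one has $\S_m\subseteq\overline{H^+}$. Indeed the current vertex $y_m$ is not a vertex of $F$, so replacing it by any $y_m'\in\S_m\cap H^-$ keeps $F$, and hence $p$, inside the new colourful simplex; moving from $p$ along the segment towards $y_m'$ then strictly decreases the distance to $\zero$ (because $\langle y_m'-p,\nu\rangle<0$), contradicting minimality. Thus each missing colour is confined to $\overline{H^+}$.

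Next I would bring in the ray hypothesis to dispose of the generic situation and, along the way, reprove Theorem~\ref{thmB+}. Suppose $M$ contains a pair $\{i,j\}$ whose witness $k$ lies in $I$, so that $y_k$ is a vertex of $F$ and hence $y_k\in H$. Evaluating the hypothesis at $x_k=y_k$, the ray $\overrightarrow{y_k\zero}$ meets $\conv(\S_i\cup\S_j)$ at a point $q\neq y_k$; writing $q=(1-t)y_k$ with $t>0$ gives $\langle q,\nu\rangle=(1-t)|p|<|p|$, so $q\in H^-$. But $i,j\in M$ force $\S_i\cup\S_j\subseteq\overline{H^+}$ and hence $q\in\overline{H^+}$, a contradiction. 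In particular, under the hypothesis $\zero\in\conv(\S_i\cup\S_j)$ for all pairs, two missing colours can never coexist, which recovers Theorem~\ref{thmB+}.

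This leaves the configurations in which no pair inside $M$ has a witness outside $M$; the extreme and decisive instance is the facet case $|M|=1$, where $F$ is a facet, the unique missing colour $i$ satisfies $\S_i\subseteq\overline{H^+}$, and the witness of every pair $\{i,j\}$ automatically lies in $I$. Here the same computation applied at the facet vertex $y_k$ still yields $q=(1-t)y_k\in H^-\cap\conv(\S_i\cup\S_j)$ and, since $\S_i\subseteq\overline{H^+}$, a point of the facet colour $\S_j$ strictly below $H$. The geometric leverage is that $q$ is collinear with $\zero$ and $y_k$: sliding $p$ towards $q$ inside $\conv\big(\{y_\ell:\ell\in I\}\cup\{q\}\big)$ moves it in the direction $-y_k$, and since $\langle p,y_k\rangle=|p|^2$ this strictly decreases $|p|$. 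The main obstacle is that $q$ is supported on colours $i$ and $j$ with $j\in I$, so the decrease is witnessed only by an over-coloured combination; turning it into a genuine colourful simplex that is strictly closer to $\zero$ requires a careful {\cara}-type reduction restoring one point per colour, together with a treatment of any witness-closed missing set $M$. I expect this colourful reduction in the facet case to be the crux of the argument; once it is in place the distances strictly decrease, and as there are only finitely many colourful simplices the process terminates with one containing $\zero$, yielding the contradiction and the announced algorithm simultaneously.
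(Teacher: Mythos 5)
Your opening moves are sound: minimising the distance to {\zero} over colourful simplices, taking the minimal face $F$ with $p\in\operatorname{relint}F$, and concluding that every missing colour $m\in M$ satisfies $\S_m\subseteq\overline{H^+}$ is exactly {\bara}'s exchange argument, and your second paragraph correctly rules out a pair $\{i,j\}\subseteq M$ whose witness $k$ lies in $I$. But the proof stops precisely where the difficulty begins, and the step you defer is not a routine reduction --- it is the theorem. Your parenthetical claim that this already ``recovers Theorem~\ref{thmB+}'' is not right: excluding two coexisting missing colours only forces $|M|=1$, and in that facet case the hypothesis produces a point $y_j'\in\S_j\cap H^-$ whose colour $j$ is already a vertex of $F$. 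Swapping $y_j$ for $y_j'$ destroys $F$, so the exchange argument yields no distance decrease; and the point $q\in\conv(\S_i\cup\S_j)\cap H^-$ on the ray through $y_k$ is an uncoloured convex combination, so a Carath\'eodory reduction of $\conv(F\cup\{q\})$ can perfectly well return two points of the same colour (and then you are back where you started). This is the known obstruction to proving Theorem~\ref{thmB+} by distance descent, and it is why the paper abandons local exchanges altogether: it proves the general-position case via Theorem~\ref{thm2}, building a combinatorial $(d-1)$-pseudomanifold from a pair of disjoint $\widehat{d+1}$-transversals and counting intersections of a ray with it modulo $2$ (equivalently, the Octahedron Lemma). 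Until you supply the ``colourful reduction in the facet case'', there is no proof.

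A second, independent gap: the statement carries no general-position hypothesis, while your argument silently assumes one. If all points of $\bigcup_i\S_i\cup\{\zero\}$ lie in a common hyperplane, the minimising simplex can be degenerate, $p$ may lie in the relative interior of the convex hull of all $d+1$ vertices, and $M$ can be empty, collapsing the setup; the paper shows by an explicit example that a perturbation/compactness reduction to the generic case fails, and instead runs a fixed-point iteration on the set map $\mathcal{F}(B)$ of witness colours. Incidentally, your deferred ``witness-closed $M$'' branch (all pairs in $M$ witnessed inside $M$) can be completed exactly along those lines --- iterate the witness map to a fixed set $B^*$ with $|B^*|\ge 3$, then take an extreme point of $\conv\bigl(\bigcup_{k\in B^*}\S_k\bigr)$ minimising a functional separating it from {\zero} to contradict the re-entry of the ray --- so that case is salvageable. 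The facet case $|M|=1$ is the one that is not, at least not by these means.
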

%
\begin{figure}[htb] 
\includegraphics[width=7cm]{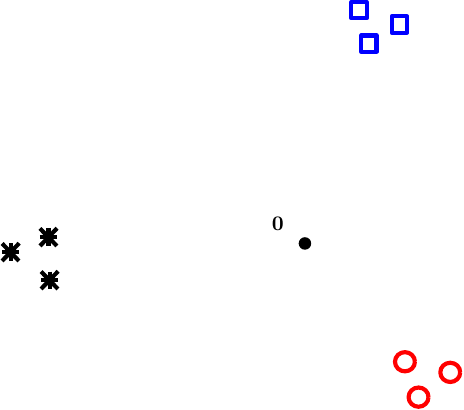}
\caption{A set in dimension $2$ satisfying the condition of Theorem~\ref{thm1} but not the one of Theorem~\ref{thmB+}.}
\label{fig:2d}
\end{figure}
\noindent
Under the general position assumption, Theorem~\ref{thm1} can be derived from the slightly stronger Theorem~\ref{thm2}  where $H^+(T_{i})$ denotes, for any $\widehat{i}$-transversal $T_{i}$, the open half-space defined by $\aff(T_{i})$ and containing {\zero}. 
\begin{theorem}\label{thm2}
Let $\S_1,\S_2,\ldots,\S_{d+1}$ be finite sets of points in $\R^d$ such that the points in $\bigcup_i\S_i\cup\{\zero\}$ are distinct and in general position. Assume that, for any $i\neq j$,  $(\S_i\cup\S_j)\cap H^+(T_j)\neq\emptyset$ for any $\widehat{j}$-transversal $T_j$.  Then there exists a set $S\subseteq\bigcup_i\S_i$ such that $|S\cap\S_i|=1$ for $i=1,\ldots,d+1$ and $\zero\in\conv(S)$.
\end{theorem}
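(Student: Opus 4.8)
The plan is to prove Theorem~\ref{thm2} by a continuity/parity argument on colourful transversals, in the spirit of {\bara}'s original topological proof but executed combinatorially. The key object is the collection of $\widehat{(d+1)}$-transversals, i.e.\ colourful sets $T$ of $d$ points, one from each of $\S_1,\ldots,\S_d$. Under general position each such $T$ spans a hyperplane $\aff(T)$ missing {\zero}, and one measures the signed distance from {\zero} to $\aff(T)$, or equivalently records which open half-space $H^+(T)$ contains {\zero}. The strategy is to consider a colourful transversal that is, in a suitable sense, \emph{closest} to {\zero}: among all $\widehat{(d+1)}$-transversals $T$, choose one minimizing the distance from {\zero} to $\aff(T)$ (or minimizing the number of points lying in $H^+(T)$). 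I would then argue that for such an extremal $T$ one can complete it to a colourful simplex containing {\zero} by adjoining an appropriate point of $\S_{d+1}$, using the hypothesis to guarantee that the needed point exists.

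\medskip
\noindent
First I would set up the extremal selection precisely. Fix a generic orientation and for each $\widehat{(d+1)}$-transversal $T_{d+1}=\{x_1,\ldots,x_d\}$ with $x_i\in\S_i$, let $h(T_{d+1})$ be the distance from {\zero} to $\aff(T_{d+1})$. Choose $T^\star=\{x_1^\star,\ldots,x_d^\star\}$ attaining the minimum of $h$ over all such transversals; this is well defined since the family is finite and, by general position, no $\aff(T)$ passes through {\zero}. The half-space $H^+(T^\star)$ is the open side of $\aff(T^\star)$ containing {\zero}. The goal reduces to exhibiting $x_{d+1}\in\S_{d+1}$ such that $\zero\in\conv\{x_1^\star,\ldots,x_d^\star,x_{d+1}\}$; geometrically this asks for a point of $\S_{d+1}$ on the far side, whose segment to the simplex $\conv(T^\star)$ sweeps across {\zero}.

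\medskip
\noindent
The core of the argument is a local exchange/pivot lemma. Suppose, for contradiction, that no point of $\S_{d+1}$ completes $T^\star$ to a simplex capturing {\zero}. I would analyse the position of {\zero} relative to the cone $\cone(T^\star)$ and the facets of $\conv(T^\star)$: failure to capture {\zero} means {\zero} projects outside some facet, i.e.\ there is a facet $F$ of $\conv(T^\star)$, obtained by deleting some $x_p^\star$, such that {\zero} lies in the open half-space $H^+(T_p)$ bounded by $\aff(T_p)$ where $T_p=T^\star\setminus\{x_p^\star\}$ augmented by a point of $\S_{d+1}$ — here is where the hypothesis enters. Applying the assumption with the pair $(i,j)=(p,d+1)$ (and the guaranteed index), the hypothesis $(\S_p\cup\S_{d+1})\cap H^+(T')\neq\emptyset$ for the relevant $\widehat{\,\cdot\,}$-transversal $T'$ produces a point $y$ in the positive half-space. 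Replacing $x_p^\star$ by $y$ yields a new $\widehat{(d+1)}$-transversal $T'$ with $h(T')<h(T^\star)$, contradicting minimality. I would make the monotonicity of $h$ under such a swap rigorous by comparing the two affine hyperplanes through the common ridge $T^\star\setminus\{x_p^\star\}$ and noting that moving the deleted vertex to the {\zero}-side strictly decreases the distance to {\zero}.

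\medskip
\noindent
The main obstacle I anticipate is making the pivot lemma genuinely decrease $h$ \emph{and} preserve the transversal (colourful) structure simultaneously: the exchanged point must come from the correct colour class, and I must verify that the hypothesis as stated — which concerns pairs $(\S_i\cup\S_j)$ and $\widehat{j}$-transversals — supplies precisely a point in $H^+$ of the right facet, not merely some half-space. Handling the case analysis over which facet {\zero} escapes through, and ensuring the strict decrease of $h$ at every admissible pivot (so the finite descent terminates at a {\zero}-capturing simplex), is the delicate part; general position is what guarantees strictness and rules out degenerate ties. If the direct distance functional proves awkward, the fallback is to replace $h$ by the integer statistic $|\{\text{points in } H^+(T)\}|$ and run the same descent, which converts the continuity argument into a clean combinatorial minimization.
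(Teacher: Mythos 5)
There is a genuine gap, and it occurs at the heart of your argument: the exchange/pivot lemma is false. Replacing a vertex $x_p^\star$ of $T^\star$ by a point $y\in H^+(T^\star)$ does \emph{not} in general decrease the distance from {\zero} to the affine hull. A two-dimensional counterexample: take the ridge to be the single point $q=(1,0)$, let $\aff(T^\star)$ be the line through $(1,0)$ and $(2,1)$ (distance $1/\sqrt{2}\approx 0.707$ from the origin, with the origin in $H^+$), and let $y=(0,10)$, which lies in $H^+(T^\star)$; the new line through $(1,0)$ and $(0,10)$ has distance $10/\sqrt{101}\approx 0.995$ from the origin. So $h$ strictly \emph{increases}, and your descent is not well-founded. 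The fallback statistic $|\{\text{points in }H^+(T)\}|$ fares no better, since the half-space itself changes under the swap and no monotonicity is available. A second, independent problem is the endgame: even for a genuinely extremal transversal $T^\star$, the hypothesis only supplies a point of $\S_i\cup\S_{d+1}$ in the \emph{open half-space} $H^+(T^\star)$, and a point $v'_{d+1}\in\S_{d+1}\cap H^+(T^\star)$ need not satisfy $\zero\in\conv(T^\star\cup\{v'_{d+1}\})$ --- the origin can lie near $\aff(T^\star)$ but outside the prism over $\conv(T^\star)$. Your proposal has no mechanism to close this final step.

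The paper's proof shows why both difficulties are real and how to get around them. Its extremal object is not the transversal hyperplane closest to {\zero} but the \emph{first colourful $(d-1)$-simplex} $\sigma$ crossed by a fixed generic ray $\vecr$ from {\zero}; the ``first hit'' property (of the simplex, not its hyperplane) is exactly what makes the implication ``$v'_{d+1}\in\S_{d+1}\cap H^+(T_{d+1})$ $\Rightarrow$ $\zero\in\conv(v_1,\ldots,v_d,v'_{d+1})$'' work, since otherwise $\vecr$ would cross another colourful facet before $\sigma$. And in the remaining case, where every witness lies in $\S_i$ for $i\le d$, the paper does not pivot at all: it assembles the two transversals $T$ and $T'$ into a combinatorial octahedron $\mathcal{M}$ and invokes a global parity argument (every generic ray from {\zero} meets $M$ an odd number of times, hence at least once), so that the ray opposite to any $v\in\S_{d+1}$ hits a colourful facet $\tau$ with $\zero\in\conv(\tau\cup\{v\})$. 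This covering step has no analogue in your local-descent scheme, and without it the case where all witnesses avoid $\S_{d+1}$ cannot be resolved. To repair your approach you would essentially have to import both the ray-based extremality and the octahedron parity argument, at which point you have reproduced the paper's proof.
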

\noindent
Note that, as the conditions of Theorems~\ref{thmB} and~\ref{thmB+}, but unlike the one of Theorem~\ref{thm2}, the condition of Theorem~\ref{thm1} is computationally easy to check. Indeed, testing whether a ray intersects the convex hull of a finite number of  points amounts to solve a linear optimization feasibility problem which is polynomial-time solvable.\\

\noindent
In the plane and assuming general position, Theorem~\ref{thm1} can be generalized to Theorem~\ref{thm2d}. The proofs of Theorems~\ref{thm1},~\ref{thm2}, and~\ref{thm2d} are given in Section~\ref{Sproof}.
\begin{theorem}\label{thm2d}
Let $\S_1,\S_2,\S_3$ be finite sets of points in $\R^2$ such that the points in $\S_1\cup\S_2\cup\S_3\cup\{\zero\}$ are distinct and in general position.
Assume that, for pairwise distinct $i,j,k\in\{1,2,3\}$, the convex hull of $\:\S_i\cup\S_j$ intersects the line $\aff(x_k,\zero)$ for all $x_k\in\S_k$.
Then there exists a set $S\subseteq\S_1\cup\S_2\cup\S_3$ such that $|S\cap\S_i|=1$ for $i=1,2,3$ and $\zero\in\conv(S)$.
\end{theorem}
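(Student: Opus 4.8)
My plan is to argue by contradiction: assume the hypothesis holds but no colourful simplex contains $\zero$, and then exhibit a point of some $\S_k$ violating the hypothesis. The first move is to pass to the unit circle of directions, identifying each $x\in\bigcup_i\S_i$ with the ray $\overrightarrow{\zero x}$. General position guarantees that these directions are pairwise distinct, that no two are antipodal, and that no point lies on the line $\aff(y,\zero)$ through another point $y$. Under this dictionary, a colourful triple $S$ satisfies $\zero\in\conv(S)$ exactly when the three directions lie in no open semicircle, and $\conv(\S_i\cup\S_j)$ meets $\aff(x_k,\zero)$ exactly when $\S_i\cup\S_j$ is not contained in one of the two open semicircles bounded by the diameter through $x_k$. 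As a first reduction, recall that a finite set avoids every open semicircle iff its convex hull contains $\zero$; so if none of $\S_1\cup\S_2,\ \S_1\cup\S_3,\ \S_2\cup\S_3$ lies in an open semicircle, then $\zero\in\conv(\S_i\cup\S_j)$ for every pair and Theorem~\ref{thmB+} already produces the desired colourful simplex. Hence I may assume, after relabelling, that $\S_1\cup\S_2$ lies in an open semicircle, and I fix a minimal arc $[\mu,\nu]$ of length $<\pi$ containing it, with $\mu,\nu\in\S_1\cup\S_2$.

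Next I would feed this into the hypothesis for the pair $\{1,2\}$ with $k=3$. The key local computation is that, because $\S_1\cup\S_2\subseteq[\mu,\nu]$ has angular span below $\pi$, the diameter through a point $x_3\in\S_3$ meets $\conv(\S_1\cup\S_2)$ if and only if exactly one of $x_3$ and its antipode $x_3^{*}$ has direction in the open arc $(\mu,\nu)$ (the two possibilities cannot occur simultaneously since the arc is shorter than $\pi$). Thus the hypothesis forces every $x_3\in\S_3$ to have either its own direction or its antipodal direction inside $(\mu,\nu)$.

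The crux is the claim that if some $x_3\in\S_3$ has its antipode $x_3^{*}$ in $(\mu,\nu)$, then a colourful simplex containing $\zero$ already exists. Indeed $x_3^{*}$ then lies strictly between $\mu$ and $\nu$, both of which belong to $\S_1\cup\S_2$, so $\S_1\cup\S_2$ has points on both sides of $x_3^{*}$ inside $[\mu,\nu]$; a short check on the colours of the points flanking $x_3^{*}$ shows that one can always select $x_1\in\S_1$ and $x_2\in\S_2$ lying on opposite sides of $x_3^{*}$ within $[\mu,\nu]$. Since every sub-arc of $[\mu,\nu]$ is shorter than $\pi$, having $x_3^{*}$ strictly between $x_1$ and $x_2$ is precisely the condition for $\{x_1,x_2,x_3\}$ to contain $\zero$. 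I expect this colour-bookkeeping near the antipode to be the main obstacle to write carefully, as it is exactly here that the extra generality over Theorem~\ref{thm1} (line versus ray) is consumed.

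It then remains to handle the case in which no antipode of an $\S_3$-point lies in $(\mu,\nu)$; by the previous paragraph every $x_3\in\S_3$ has its own direction in $(\mu,\nu)$, so $\S_1\cup\S_2\cup\S_3\subseteq[\mu,\nu]$, an arc of length $<\pi$. Taking $\mu$ to be the counterclockwise-first of all these points, it lies in $\S_1\cup\S_2$, say $\mu\in\S_1$; then $\S_2\cup\S_3$ is contained in the open semicircle immediately following $\mu$, so the diameter through $\mu$ misses $\conv(\S_2\cup\S_3)$, contradicting the hypothesis for the pair $\{2,3\}$ with $k=1$. The case $\mu\in\S_2$ is symmetric, and this contradiction finishes the argument; the only genuinely delicate step is the crux above, the rest being the angular translation of the hypotheses and routine interval bookkeeping.
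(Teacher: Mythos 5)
Your proof is correct, but it follows a genuinely different route from the paper's. The paper builds a directed graph on $\S_1\cup\S_2\cup\S_3$ with arcs between differently coloured points, oriented so that $\zero$ lies to the right of every arc; the hypothesis translates into every vertex having in- and out-degree at least one, hence a circuit exists, a shortest circuit has length $3$ (done) or $4$ and then is $2$-coloured, and a winding-number argument on that $4$-circuit shows the ray from $\zero$ opposite to any point of the missing colour must cross one of its arcs, producing the colourful triangle. You instead project radially onto the circle of directions and split into two cases: if no $\S_i\cup\S_j$ lies in an open semicircle you invoke Theorem~\ref{thmB+}, and otherwise you pin $\S_1\cup\S_2$ into a minimal arc $[\mu,\nu]$ of length $<\pi$ and show the hypothesis forces each $x_3$ or its antipode into $(\mu,\nu)$, extracting a colourful triangle from an antipode (your colour bookkeeping there is sound: if $\mu$ and $\nu$ have different colours they already flank the antipode, and otherwise any point of the other colour pairs with whichever of $\mu,\nu$ is on the opposite side) or reaching a contradiction at the extreme point $\mu$. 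The trade-off: the paper's argument is self-contained and isolates the source/sink-implies-circuit idea that the authors explicitly flag as the obstacle to higher-dimensional generalization (the $2$-coloured $4$-circuit being their ``octahedron''), whereas yours leans on Theorem~\ref{thmB+} as a black box for one case but makes very explicit where the extra generality of a line $\aff(x_k,\zero)$ over a ray $\overrightarrow{x_k\zero}$ is consumed. Both arguments implicitly assume each $\S_i$ is nonempty, as does the statement itself.
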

\noindent
Figures~\ref{fig:2d} and~\ref{fig:3d} illustrate sets satisfying the condition of Theorem~\ref{thm1} but not the ones of Theorems~\ref{thmB} and~\ref{thmB+}. 
Let $S_d^{\triangle}$ denote the $d$-dimensional configuration where the points in $\S_i$ are clustered around the $i^{th}$ vertex of a simplex containing {\zero},
see Figure~\ref{fig:2d} for an illustration of $S_2^{\triangle}$. 
While all the $(d+1)^{d+1}$ colourful simplices of this configuration  contain {\zero},  $S_{d\geq 3}^{\triangle}$ does not satisfy the conditions of Theorems~\ref{thmB},~\ref{thmB+}, or~\ref{thm1}, but  satisfies the one of Theorem~\ref{thm2}. 
While the set given in  Figure~\ref{fig:3d_gen} satisfies the condition of Theorem~\ref{thm2}, it  does not satisfy the condition of  Theorem~\ref{thm1}  for $i=$~\includegraphics[width=0.3cm]{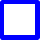} and $j=$~\includegraphics[width=0.3cm]{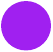}.
Figure~\ref{fig:2d_gen} illustrates a set satisfying the condition of Theorem~\ref{thm2d} but not the one of Theorem~\ref{thm2}.\\

\begin{figure}[htb]  
\includegraphics[width=8cm]{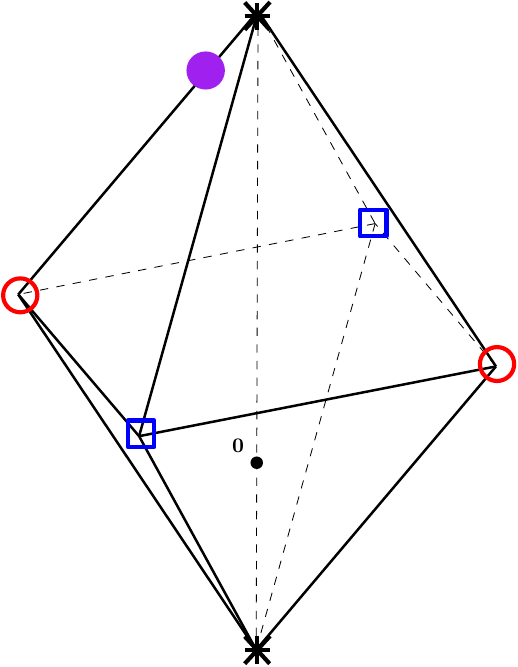}
\caption{A degenerate set in dimension $3$ satisfying the condition of Theorem~\ref{thm1} but not the one of Theorem~\ref{thmB+}.}
\label{fig:3d}
\end{figure}

\noindent
One can check that Theorem~\ref{thm2} is still valid if the general position assumption is replaced by: {\em there is at least one transversal $T$ such that $\zero\notin\aff(T)$ and such that the points of $T$ are affinely independent}. However, we are not aware of an obvious way to handle, via  Theorem~\ref{thm2}, configurations where all points and the origin lie in the same hyperplane. Note that Theorem~\ref{thm1} can be applied to such degenerate configurations. See  Section~\ref{subsec:thm1} for a proof of Theorem~\ref{thm1} and a configuration which illustrates the gap between Theorem~\ref{thm1} and its general position version, and justifies the specific treatment for the degenerate cases.

\begin{figure}[htb]  
\includegraphics[width=8cm]{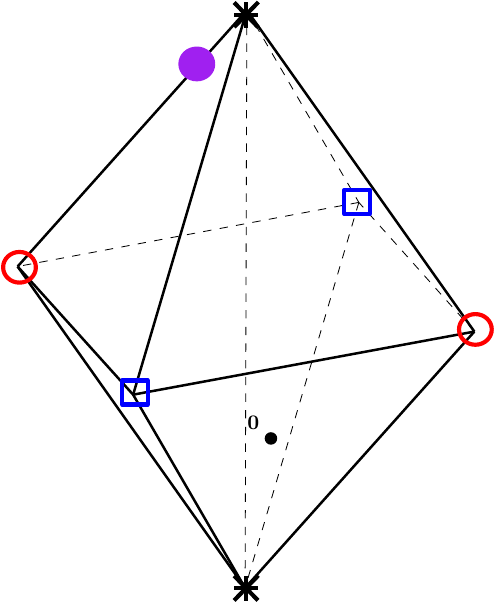}
\caption{A degenerate set in dimension $3$ satisfying, up to a slight perturbation, the condition of Theorem~\ref{thm2} but not the one of Theorem~\ref{thm1}.}
\label{fig:3d_gen}
\end{figure}

\begin{figure}[htb]  
\includegraphics[width=7cm]{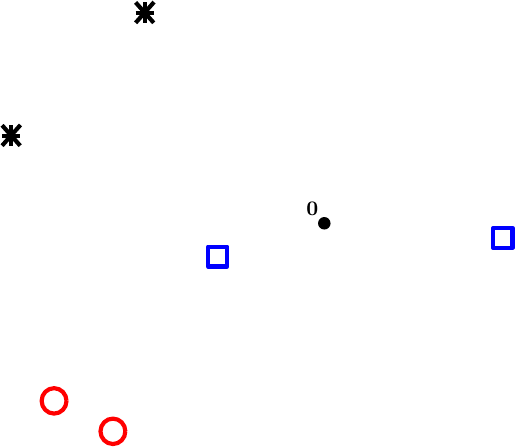}
\caption{A  set in dimension $2$ satisfying the condition of Theorem~\ref{thm2d} but not the one of Theorem~\ref{thm2}.}
\label{fig:2d_gen}
\end{figure}

\clearpage
\newpage
\section{Proofs}\label{Sproof}
\subsection{Proof of Theorem~\ref{thm2}}\label{Sproofsub}
We recall that a $k$-simplex $\sigma$ is the convex hull of $(k+1)$ affinely independent points. 
An {\em abstract simplicial complex} is a family $\mathcal{F}$ of subsets of a finite ground set such that whenever $\F\in\mathcal{F}$ and $G\subseteq \F$, then $G\in\mathcal{F}$. These subsets are called {\em abstract} simplices. The {\em dimension} of an abstract simplex is its cardinality minus one. The {\em dimension} of a simplicial complex is the dimension of largest simplices. A {\em pure} abstract simplicial complex is a simplicial complex whose maximal simplices have all the same dimension. A {\em combinatorial $d$-pseudomanifold} $\mathcal{M}$ is a pure abstract $d$-dimensional simplicial complex such that any abstract $(d-1)$-simplex is contained in exactly $2$ abstract $d$-simplices.\\

\noindent
Consider a ray $\vecr$ originating from $\zero$ and intersecting at least one colourful $(d-1)$-simplex. 
Under the general position assumption for points in $\bigcup_i\S_i\cup\{\zero\}$, one can choose $\vecr$ such that it intersects the interior of
the colourful $(d-1)$-simplex, and that no two colourful simplices have the same intersection with $\vecr$. 
Let $\sigma$ be the {\em first} colourful $(d-1)$-simplex intersected by $\vecr$. Note that, given $\vecr$, $\sigma$ is uniquely defined.
Without loss of generality, we can assume that the vertices of $\sigma$ form the $\widehat{d+1}$-{\it transversal} $\{v_1,\ldots,v_d\}$.\\

\noindent
Setting $j=d+1$, and $T_{d+1}=\{v_1,\ldots,v_d\}$ in Theorem~\ref{thm2} gives  $(\S_i\cup\S_{d+1})\cap H^+(T_{d+1})\neq\emptyset$. 
In other words, there is, for each $i$, a point either in $\S_{d+1}\cap H^+(T_{d+1})$ or in $(\S_i\setminus\{v_i\})\cap H^+(T_{d+1})$. 
Assume first that for one $i$ the corresponding point belongs to $\S_{d+1}$, and name it $v'_{d+1}$.
Then $\vecr$ intersects the boundary of $\conv(v_1,\ldots,v_d,v'_{d+1})$ in only one point
as otherwise $\vecr$ would intersect another colourful $(d-1)$-simplex before intersecting $\sigma$.
Indeed, $\vecr$ leaves $H^+(T_{d+1})$ after intersecting $\sigma$. 
Thus, $\vecr$ intersects $\conv(v_1,\ldots,v_d,v'_{d+1})$ in exactly one point; that is, $\zero\in\conv(v_1,\ldots,v_d,v'_{d+1})$.
Therefore, we can assume that for each $i$ there is a point $v'_i\neq v_i$ in $\S_i\cap H^+(T_{d+1})$, and consider the 
$\widehat{d+1}$-transversal $T'=\{v'_1,\ldots,v'_d\}$ and the associated colourful $(d-1)$-simplex $\sigma'=\conv(v'_1,\ldots,v'_d)$.
Let $\mathcal{M}$ be the abstract simplicial complex defined by
	$$\mathcal{M}=\{F\cup F':\,F\subseteq V(\sigma),\,F'\subseteq V(\sigma')\mbox{ and }c(F)\cap c(F')=\emptyset\}$$
 where $V(\sigma)$ denotes the vertex set of $\sigma$ and $c(x)=i$ for $x\in\S_i$. 
The simplicial complex $\mathcal{M}$ is a combinatorial $(d-1)$-pseudomanifold. Note that $V(\sigma)$ and $V(\sigma')$ are abstract simplices of $\mathcal{M}$.
Let $M$ be the collection of the convex hulls of the abstract simplices of $\mathcal{M}$.  Note that the vertices of all maximal simplices of $M$ form $\widehat{d+1}$-transversals and that $\mathcal{M}$
is not necessarily a simplicial complex in the geometric meaning as some pairs of geometric $(d-1)$-simplices might have intersecting interiors.\\

\noindent
We recall that for any generic ray originating from $\zero$, the parity of the number of times its intersects $M$ is the same. 
We remark that this number can not be even as, otherwise, we would have a colourful $(d-1)$-simplex closer to $\zero$ than $\sigma$ on $\vecr$ since,
$M$ being contained in the closure of $H^+(T_{d+1})$, when $\vecr$ intersects $\sigma$, it is the last intersection. 
Thus, the number of times $\vecr$ intersects $M$ is odd, and actually equal to $1$. 
Take now any point $v\in\S_{d+1}$ and consider the ray originating from $\zero$ towards the direction opposite to $v$. 
This ray intersects $M$ in a colourful $(d-1)$-simplex $\tau$; that is, $\zero\in\conv(\tau\cup\{v\})$.
\qed\\

\noindent
One can check that the proof of Theorem~\ref{thm2} still works if there is at least one transversal $T$ such that $\zero\notin\aff(T)$ and such that the points of $T$ are affinely independent.  Indeed, in that case, we can always choose a ray $\vecr$ such that, for any pair $(T,T')$ of transversals, $\vecr\cap\aff(T) = \vecr \cap \aff(T')$ if and only if $\aff(T)=\aff(T')$. 

\begin{remark}
The topological argument that the parity of the number of times a ray originating from $\zero$ intersects $M$ depends only on the respective positions of $\zero$ and $M$ can be replaced by Proposition~\ref{prop2} as used in the description of the algorithm given in Section~\ref{Salgo}. In other words, we get a geometric proof of Theorem~\ref{thm2}.
\end{remark}

\noindent
Assuming $\bigcup_i\S_i$ lies on the sphere $\Sph^{d-1}$, the $\widehat{i}$-transversals generate full dimensional colourful cones pointed at {\zero}. 
We say that a transversal {\it covers} a point if the point is contained in the associated cone.
Colourful simplices containing $\zero$ are generated whenever the antipode of a point of colour $i$ is covered by an $\widehat{i}$-transversal.
In particular, one can consider combinatorial {\it octahedra} generated by pairs
of disjoint $\widehat{i}$-transversals, and rely on the fact that every octahedron $\Omega$ either
covers all of $\Sph^{d-1}$ with colourful cones, or every point $x \in \Sph^{d-1}$ that is covered by colourful cones 
from $\Omega$ is covered by at least two distinct such cones, see for example the {\em Octahedron Lemma} of~\cite{BM06}. 
One of the key argument in the proof of Theorem~\ref{thm2} can be reformulated as: either the pair of $\widehat{d+1}$-transversals $(T,T')$ forms a octahedron covering $\Sph^{d-1}$, or {\zero} belongs to a colourful simplex having $\conv(T)$ as a facet.

\subsection{Proof of Theorem~\ref{thm1}}\label{subsec:thm1}
Consider a configuration satisfying the conditions of Theorem~\ref{thm1} and with $\bigcup_i\S_i\cup\{\zero\}$ distinct and in general position. Consider $i\neq j$ and a ${\hat j}$-transversal $T_j$, then there is $x_k\in\S_k\cap T_j$ such that the $\overrightarrow{x_k\zero}$ intersect the convex hull of $\:\S_i\cup\S_j$ in a point in $H^+(T_j)$, and therefore at least one point of $\:\S_i\cup\S_j$ belongs to $H^+(T_j)$.\\

\noindent 
Let consider degenerate  configurations and let $a$ denote the maximum cardinality of an affinely independent colourful set whose affine hull does not contain {\zero}.\\

\noindent
 If $a=d$, there is at least one transversal $T$ such that $\zero\notin\aff(T)$ and such that the points of $T$ are affinely independent. Therefore we can use the stronger version of Theorem~\ref{thm2} relying on the existence of such a  transversal $T$.\\

\noindent
Assume that $a<d$. We can choose a ray $\vecr$ such that the non-empty intersections with $\aff(A)$ for all colourful sets $A$ of cardinality $a$ are distinct. Let $A^0$ be an affinely independent colourful set of cardinality $a$ such that $\aff(A^0)$ is the first intersected by $\vecr$. Without loss of generality, let $A^0=\{v_1,\ldots,v_a\}$ with $v_s\in\S_s$. Note that $\S_{a+1}\cup\ldots\cup\S_{d+1}\subset\aff(A^0\cup\{\zero\})$ as otherwise $\zero\notin\aff(A^0\cup\{v_j\})$ for $v_j\in\S_j$ with $j>a$ which contradicts the maximality of $a$.  If there is a colourful simplex containing {\zero}, we are done. Therefore, we can assume that, in $\aff(A^0\cup\{\zero\})$, we have an open half-space defined by $\aff(A^0)$ containing {\zero} but not $\S_{a+1}\cup\ldots\cup\S_{d+1}$, and will derive a contradiction.\\

\noindent
Let $B_0=\{a+1,\ldots,d+1\}$. We remark that, for all $i,j\in B_0$ with $i\neq j$, the $k$, such that $\conv(\S_i\cup\S_j)$ intersects $\overrightarrow{x_k\zero}$ in a point distinct from $x_k$, satisfies $k\in B_0$ since $\S_i\cup\S_j$ are separated from {\zero} by $\aff(A^0)$ in $\aff(A^0\cup\{\zero\})$; and therefore we have $|B_0|\geq 3$. We can define the following set map:
$$\mathcal{F}(B)=
\left\{
\begin{array}{ll}
\{k: \exists (i,j)\in B\times B,\,i\neq j, \forall x_k\in\S_k,\,\conv(\S_i\cup\S_j)\cap\overrightarrow{x_k\zero}\setminus\{x_k\}\neq\emptyset\} & \mbox{if $|B|\geq 2$} \\
 \emptyset & \mbox{otherwise.}
\end{array}
\right.$$ 
We have $\mathcal{F}(B)\subseteq\mathcal{F}(B')$ if $B\subseteq B'$. Let $B_{\ell}=\mathcal{F}(B_{\ell-1})$ for $\ell=1,2,\dots$
As remarked above $B_1\subseteq B_0$ and, by induction, $B_{\ell}\subseteq B_{\ell-1}$ for $\ell\geq 1$. 
Thus, the sequence $(B_{\ell})$ converges towards a set $B^*$ satisfying $\mathcal{F}(B^*)=B^*$. 
Finally, note that, by induction, $|B_{\ell}|\geq 3$: The base case holds as $|B_0|\geq 3$, and a pair $i,j\in B_{\ell}$ with $i\neq j$ yields a $k\in B_{\ell+1}$, then $i,k$ yields an additional $k'$ in $B_{\ell+1}$, which in turn, with $k$, yields a third element in $B_{\ell+1}$; and thus $|B^*|\geq 3$.\\

\noindent
For any $v\in\bigcup_{k\in B^*}\S_k$, the ray $\overrightarrow{v\zero}$ intersects the convex hull of $\bigcup_{k\in B^*}\S_k$ in a point distinct from $v$ since $\mathcal{F}(B^*)=B^*$. It contradicts the fact that $\aff(A^0)$ separates {\zero} from $\S_{a+1}\cup\ldots\cup\S_{d+1}$ in $\aff(A^0\cup\{\zero\})$ by the following argument. There exists at least one facet of $\conv(\bigcup_{k\in B^*}\S_k)$ whose supporting hyperplane separates {\zero} from $\conv(\bigcup_{k\in B^*}\S_k)$ and, for a vertex $v$ of this facet, we have $\conv(\bigcup_{k\in B^*}\S_k)\cap\overrightarrow{v\zero}=\{v\}$, which is impossible.\qed\\

\noindent
The gap between Theorem~\ref{thm1} and its general position version is illustrated by the following example in  $\mathbb{R}^{3}$ where $\bigcup_i\S_i\cup\{\zero\}$ lie in the same plane.
{\em Let $\S_1,\S_2,\S_3,\S_4$ be finite sets of points in $\R^2$.  Assume that, for each $1\leq i< j \leq d+1$, there exists $k\notin\{i,j\}$ such that, for all $x_k\in\S_k$, 
the convex hull of $\:\S_i\cup\S_j$ intersects the ray $\overrightarrow{x_k\zero}$ in a point distinct from $x_k$. Then there exists a set $S\subseteq\bigcup_i\S_i$ such that $|S\cap\S_i|=1$ for $i=1,\ldots,d+1$ and $\zero\in\conv(S)$.} This property cannot be obtained  by simply applying Theorem~\ref{thm1} with $d=2$ since its conditions might not be satisfied by $\S_1,\S_2,\S_3$. Indeed, $k$ may be equal to $4$ for some $i\neq j$. This property can neither be obtained by a compactness argument since it would require to find sequences $(\S_i^j)_{j=1,\ldots,\infty}$ of generic point sets converging to $\S_i$ while satisfying the condition of Theorem~\ref{thm1}. The case when each $\S_i$ is reduced to one point $s_i$ shows that such a sequence may fail to exist as the condition implies that $s_1^j$, $s_2^j$, $s_3^j$, $s_4^j$ and $\zero$  lie in a common plane. This might explain why we could not avoid a Tarsky-type fixed point argument.
\subsection{Proof of Theorem~\ref{thm2d}
}\label{Sproof2}
We present a proof of Theorem~\ref{thm2d} for the planar case providing an alternative and possibly more combinatorial proof of Theorem~\ref{thm2} in the plane.
Consider the graph $G=(V,E)$ with $V=\S_1\cup\S_2\cup\S_3$ and where a pair of nodes are adjacent if and only if they have different colours.
We get a directed graph $D=(V,A)$ by orienting the edges of $G$ such that {\zero} is always on the right side of any arc, i.e. on the right side of the line extending it, with the induced orientation. 
Since $\conv(\S_i\cup\S_j)\cap\aff(x_k,\zero)\neq\emptyset$ with $i,j,k$ pairwise distinct and $x_k\in\S_k$, we have $\deg^+(v)\geq 1$ and $\deg^-(v)\geq 1$ for all $v\in V$. 
It implies that there exists at least one circuit in $D$, and we consider the shortest circuit $C$. 
We first show that the length of $C$ is at most $4$ since any circuit of length $5$ or more has necessarily a chord. Indeed, take a vertex $v$, there is a vertex $u$ on the circuit at distance $2$ or $3$ having a colour distinct from the colour of $v$, and thus the arc $(u,v)$ or $(v,u)$ exists in $D$. 
Therefore, the length of $C$ must be $3$ or $4$. If the length is $3$, we are done as the $3$ vertices of $C$ form a colourful triangle containing $\zero$. 
If the length is $4$, the circuit $C$ is $2$-coloured as otherwise we could again find a chord. 
Consider such a $2$-coloured circuit $C$ of length $4$ and take any generic ray originating from $\zero$. 
We recall that given an oriented closed curve $\C$ in the plane, 
with $k_+$, respectively $k_-$, denoting the number of times a generic ray intersects $\C$ while entering by the right, respectively left, side,  
the quantity $k_+-k_-$ does not depend on the ray.
Considering the realization of $C$ as a curve $\C$, we have $k_-=0$ by definition of the orientation of the arcs. 
Since we can choose a ray intersecting  $C$ at least once,  $k_+$ remains constant and non-zero.
Take now a vertex $w$ of the missing colour, and take the ray originating from $\zero$ in the opposite direction.
This ray intersects an arc of $C$ since $k_+\neq 0$, and the endpoints of the arc together with $w$ form a colourful triangle containing $\zero$.\qed

\begin{remark}
The fact that a directed graph missing a source or a sink has always a circuit is a key argument, and it is not clear to us how the planar proof could be extended or adapted to dimension $3$ or more.
\end{remark}

\section{Related results and an algorithm}
\subsection{Given one, find another one} B\'ar\'any and Onn~\cite{BO97} raised the following algorithmic question:
Given sets $\S_i$ containing $\zero$ in their convex hulls, finding a colourful simplex containing $\zero$ in its convex hull. This question,  called {\em colourful feasibility problem},  belongs to the {\em Total Function Nondeterministic Polynomial} (TFNP) class, i.e. problems whose decision version has always a {\em yes} answer.  The geometric algorithms introduced by {\bara}~\cite{Bar82} and {\bara} and Onn~\cite{BO97}  and other methods to tackle the colourful feasibility problem, such as multi-update modifications, are studied and benchmarked in~\cite{DHST08}. The complexity  of this challenging problem, i.e.  whether it is polynomial-time solvable or not, is still an open question.  However, there are strong indications  that no TFNP-complete problem exists, see~\cite{P94}. The following Proposition~\ref{prop2}, which is similar in flavour to Theorem~\ref{thmB}, may indicate  an inherent hardness result for  this relative of the colourful feasibility  problem. Indeed, the algorithmic problem associated to Proposition~\ref{prop2} belongs to the {\em Polynomial Parity Argument}  (PPA) class defined by Papadimitriou~\cite{P94} for which complete problems are known to exist. In addition, the proof of Proposition~\ref{prop2} is a key ingredient of the algorithm finding a colourful simplex under the condition of Theorem~\ref{thm2}.
\begin{proposition}\label{prop2}
Given $d+1$ sets, or colours, $\S^*_1, \S^*_2,\ldots,\S^*_{d+1}$ of points in $\R^d$ with $|\S^*_i|=2$ for $i=1,\dots,d+1$, if there is a colourful simplex containing $\:\zero$, then
there is another colourful simplex containing $\:\zero$.
\end{proposition}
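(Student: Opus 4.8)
The plan is to exhibit the $2^{d+1}$ colourful simplices as the facets of a combinatorial cross-polytope and to run a mod-$2$ parity argument of exactly the kind already used for Theorem~\ref{thm2}. Write $\S^*_i=\{a_i,b_i\}$ and let $\mathcal{M}$ be the abstract simplicial complex whose facets are the $2^{d+1}$ colourful sets $\{c_1,\ldots,c_{d+1}\}$ with $c_i\in\{a_i,b_i\}$. This $\mathcal{M}$ is the boundary of a combinatorial $(d+1)$-dimensional cross-polytope, hence a combinatorial $d$-pseudomanifold: every colourful $(d-1)$-simplex $\tau$ (one point chosen in $d$ of the colours, missing some colour $m$) is a face of exactly the two colourful $d$-simplices obtained by appending $a_m$ or $b_m$. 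This \emph{exactly two} property is the engine of the whole argument.

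First I would settle the case where $\zero$ lies in the interior of at least one full-dimensional colourful $d$-simplex. Fix a generic ray $\vecr$ originating from $\zero$, meeting only relative interiors of colourful $(d-1)$-simplices and avoiding all their lower-dimensional faces. For a full-dimensional colourful $d$-simplex $\sigma$ one has $\zero\in\interior\sigma$ if and only if $\vecr$ crosses $\partial\sigma$ an odd number of times (once when $\zero$ is inside, zero or two times otherwise), while a degenerate colourful set spans no interior and is missed by $\vecr$. Summing over all $2^{d+1}$ colourful simplices,
\[
\sum_\sigma \big[\,\zero\in\interior\sigma\,\big]\ \equiv\ \sum_\sigma \#\{\vecr\cap\partial\sigma\}\pmod 2 .
\]
In the right-hand sum each colourful $(d-1)$-simplex $\tau$ is counted once for each colourful $d$-simplex having it as a facet, that is, exactly twice by the pseudomanifold property; hence the right-hand side is even, and so the number of colourful simplices containing $\zero$ in their interior is even. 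In particular, if it is positive it is at least two, which proves the statement in this case.

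Then I would dispose of every remaining degenerate situation through Carath\'eodory's theorem. If $\zero$ lies in the interior of no full-dimensional colourful simplex, then applying Carath\'eodory to the vertices of the given colourful simplex produces an affinely independent colourful subset $J$ with $\zero\in\conv(J)$ and $|J|\le d$, missing some nonempty set $M$ of colours. Appending to $J$ one point of each colour of $M$ yields $2^{|M|}\ge 2$ distinct colourful $d$-simplices, all containing $\conv(J)\ni\zero$, and the given simplex is one of them; this already exhibits a second colourful simplex containing $\zero$. Note that the interior case genuinely requires the parity argument, since there $J$ is forced to be the whole vertex set and the extension count degenerates to one.

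Combining the two cases completes the proof. The main obstacle is precisely the interface between the clean interior parity count and the degenerate configurations: the parity identity controls only interior containment, so the delicate point is to ensure that a single generic ray can be placed in general position with respect to the finitely many colourful $(d-1)$-simplices, and to treat the boundary and affinely dependent cases by the separate Carath\'eodory step rather than by a perturbation that might destroy the hypothesised containment. I would also remark that this is exactly the Octahedron Lemma invoked in the proof of Theorem~\ref{thm2}, which explains why the associated search problem lands in the class PPA.
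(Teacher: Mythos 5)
Your route is genuinely different from the paper's. The paper builds an auxiliary graph on certain subsets of $\bigcup_i\S^*_i$ of sizes $d+1$ and $d+2$ containing $\zero$ in their convex hulls, shows every node has degree $1$ or $2$ with the degree-$1$ nodes being exactly the colourful simplices containing $\zero$, and walks the path from the given simplex to its other endpoint; this is constructive and is what drives the algorithm of Section~\ref{Salgo}. You instead invoke the mod-$2$ degree of the boundary of the combinatorial cross-polytope --- i.e.\ the Octahedron Lemma --- plus a Carath\'eodory reduction for degenerate positions. The paper explicitly remarks after its proof that Proposition~\ref{prop2} can also be proven by a degree argument on the join of the $\S^*_i$ or via the Octahedron Lemma, so your approach is legitimate; what it loses is the explicit path-following structure, what it gains is brevity and the clean statement that the number of interior containments is even.

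That said, two steps need repair. First, the claim that a degenerate colourful $d$-set ``is missed by $\vecr$'' is false: if $\sigma=\conv(\tau\cup\{c\})$ with $c\in\aff(\tau)$ and $\tau$ a nondegenerate colourful $(d-1)$-simplex whose affine hull avoids $\zero$, then a positive-measure set of ray directions meets the relative interior of $\tau$ and hence meets $\sigma$; no choice of generic ray avoids this. The accounting still closes, but for a different reason: a generic ray meets the relative interiors of an \emph{even} number of the nondegenerate facets of a flat $d$-simplex (again a mod-$2$ degree statement, now for the boundary sphere mapped into the hyperplane $\aff(\sigma)$), so such $\sigma$ contribute evenly to your right-hand sum while contributing $0$ to the left; this has to be argued, not dismissed. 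Second, your Case~A as stated still admits configurations in which $\zero$ lies on the boundary of some \emph{other} colourful $d$-simplex (for instance in the relative interior of a colourful $(d-1)$-simplex), and there the identity ``$\vecr$ crosses $\partial\sigma$ an odd number of times iff $\zero\in\interior\sigma$'' fails, since the ray starts on $\partial\sigma$ and may cross it only once. The fix is to reorder the cases: if $\zero\in\conv(J)$ for \emph{any} colourful set $J$ with $|J|\leq d$, run your Carath\'eodory extension step, which already produces at least two colourful simplices containing $\zero$; otherwise every colourful simplex containing $\zero$ is nondegenerate with $\zero$ in its interior, no colourful $(d-1)$-simplex contains $\zero$, and the parity count is clean. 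With that reordering and the corrected treatment of flat simplices, your argument is correct.
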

\begin{proof}
Without loss of generality we assume that the points in $\bigcup_i\S^*_i\cup\{\zero\}$ are distinct and in general position.
Consider the graph $G$ whose nodes consist  of some subsets $\bigcup_i\S^*_i$  partitioned into three types:
$(i)$   $N_1$ made of subsets $\nu_1$ of cardinality $d+2$ with $\zero\in\conv(\nu_1)$, $|\nu_1\cap\S^*_i|=1$ for $i=1,\ldots,d$, and $|\nu_1\cap\S^*_{d+1}|=2$;
$(ii)$  $N_2$ made of subsets $\nu_2$ of cardinality $d+1$ with $\zero\in\conv(\nu_2)$, $|\nu_2\cap\S^*_i|= 1$ for $i=1,\ldots,d$ except for exactly one $i$, and $|\nu_2\cap\S^*_{d+1}|=2$; and
$(iii)$ $N_3$ made of subsets $\nu_3$ of cardinality $d+1$ with $\zero\in\conv(\nu_3)$ and $|\nu_3\cap\S^*_i|=1$ for $i=1,\ldots,d+1$. The adjacency between the nodes of $G$ is defined as follows.
There is no edge between nodes of type $\nu_2$ and $\nu_3$. The nodes $\nu_1$ and $\nu_2$, respectively $\nu_1$ and $\nu_3$, are adjacent if and only if $\nu_2\subseteq \nu_1$, respectively $\nu_3\subseteq \nu_1$.\\

\noindent
We show that $G$ is a collection of node-disjoint paths and cycles by checking the degree of $N_1$, $N_2$, and $N_3$ nodes. First consider a $N_1$ node $\nu_1$. We recall that, under the general position assumption, there are exactly two $d+1$-subsets $\chi$ and $\chi'$ of $\nu_1$ containing $\zero$ in their convex hull. This fact can be expressed as,
using the simplex method terminology, there is a unique leaving variable in a pivot step of the simplex method assuming non-degeneracy.
Both $\chi$ and $\chi'$ intersect $\S^*_i$ for $i=1,\ldots,d$ in at least one point except maybe for one $i$. Thus, $\chi$ and $\chi'$ are $N_2$ or $N_3$ nodes, hence the degree of a $N_1$ node is $2$.
Consider now a $N_2$ node $\nu_2$, there is a $i_0\neq d+1$ such that $|\nu_2\cap\S^*_{i_0}|=0$. The node $\nu_2$ is contained in exactly two $N_1$ nodes, each of them obtained by adding one of the points in $\S^*_{i_0}$. Hence the degree of a $N_2$ node is $2$. 
Finally, consider a $N_3$ node, it is contained in exactly one $N_1$ node obtained by adding the missing point of $\S^*_{d+1}$.
Hence, the degree of a $N_3$ node is $1$. The graph $G$ is thus a collection of node disjoint paths and cycles.\\

\noindent 
Therefore, the existence of a colourful simplex containing $\zero$ provides a $N_3$ node, and following the path in $G$ until reaching the other endpoint provides  another node of degree $1$, i.e. a $N_3$ node corresponding to a distinct colourful simplex containing the origin $\zero$.
\end{proof}
\noindent
Proposition~\ref{prop2} raises the following problem, which we call {\em Second covering colourful simplex}: Given $d+1$ sets, or colours, $\S_1, \S_2,\ldots,\S_{d+1}$ of points in $\R^d$ with $|\S_i|\geq 2$ for $i=1,\dots,d+1$, and a colourful set $S\subseteq\bigcup_i\S_i$ containing $\zero$ in its convex hull, find another such set.
The key property used in the proof of Proposition~\ref{prop2} is the fact that the existence of one odd degree node in a graph implies the existence of another one. In other words, the proof of Proposition~\ref{prop2} shows that {\em Second covering colourful simplex} belongs to the PPA class, which forms precisely the problems in TFNP for which the existence is proven through this parity argument. Other examples of PPA problems include {\em Brouwer}, {\em Borsuk-Ulam}, {\em Second Hamiltonian circuit}, {\em Nash}, or {\em Room partitioning}~\cite{ES10}.
The PPA class has a nonempty subclass of PPA-complete problems for which the existence of a polynomial algorithm would imply the existence of a polynomial algorithm for any problem in PPA, see Grigni~\cite{G01}. We do not know whether {\em Second covering colourful simplex} is PPA-complete, but it is certainly a challenging question related to the complexity of colourful feasibility problem.\\

\noindent
 Note that Proposition~\ref{prop2} can also be proven by a degree argument on the map embedding the join of the $\S^*_i$ in $\R^d$, or using the  Octahedron Lemma~\cite{BM06}.  
 
\subsection{Minimum number of colourful simplices containing {\zero}} As a corollary of Proposition~\ref{prop2}, any condition implying the existence of a colourful simplex containing {\zero} actually implies the existence of $\min_i|\S_i|$ such simplices.
\begin{corollary}\label{atleast}
Given $d+1$ sets, or colours, $\S_1, \S_2,\ldots,\S_{d+1}$ of points in $\R^d$, if there is a colourful simplex containing $\:\zero$, then
there are at least $\min_i|\S_i|$ colourful simplices containing the origin $\:\zero$.
\end{corollary}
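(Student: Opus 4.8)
The plan is to derive Corollary~\ref{atleast} from Proposition~\ref{prop2} by an inductive peeling argument on the colour class of minimum size. Set $m=\min_i|\S_i|$, and without loss of generality assume $|\S_{d+1}|=m$. The idea is to build a sequence of $m$ distinct colourful simplices containing $\zero$, each using a different point of $\S_{d+1}$, so that distinctness is automatic. I would argue by induction on $m$, the case $m=1$ being the hypothesis that at least one colourful simplex containing $\zero$ exists.

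\noindent
First I would fix a colourful simplex $\conv(S)$ containing $\zero$, guaranteed by hypothesis, and let $s_{d+1}=S\cap\S_{d+1}$ be its point of colour $d+1$. The heart of the argument is to produce a \emph{second} colourful simplex containing $\zero$ whose point of colour $d+1$ is different from $s_{d+1}$. To invoke Proposition~\ref{prop2}, I would form a two-point configuration in colour $d+1$: pick any second point $s'_{d+1}\in\S_{d+1}\setminus\{s_{d+1}\}$, set $\S^*_{d+1}=\{s_{d+1},s'_{d+1}\}$, and for $i\leq d$ choose $\S^*_i$ to be a two-point set containing the vertex $S\cap\S_i$ of our known simplex (padding with any second point of $\S_i$, or duplicating if $|\S_i|=1$, though $|\S_i|\geq m\geq 2$ in the inductive step). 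Since $\conv(S)$ already contains $\zero$ and $S\subseteq\bigcup_i\S^*_i$, Proposition~\ref{prop2} yields a \emph{second} colourful simplex $\conv(S')\ni\zero$ drawn from $\bigcup_i\S^*_i$, distinct from $\conv(S)$.

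\noindent
The main obstacle is that the second simplex $S'$ produced by Proposition~\ref{prop2} need not differ from $S$ in colour $d+1$; it might share $s_{d+1}$ and differ only in some $\S^*_i$ with $i\leq d$. To force a new point in colour $d+1$, I would instead set up the auxiliary configuration so that colour $d+1$ is the only colour with two available points while all other colours are pinned to the single vertices of $S$. Concretely, take $\S^*_i=\{S\cap\S_i\}$ duplicated to a formal two-element multiset for $i\leq d$ and $\S^*_{d+1}=\{s_{d+1},s'_{d+1}\}$; then any colourful simplex on this configuration uses the fixed vertices in colours $1,\dots,d$, so the second simplex guaranteed by Proposition~\ref{prop2} must differ precisely in colour $d+1$, hence uses $s'_{d+1}$. (Handling the duplicated-point degeneracy requires the general-position reduction already invoked in the proof of Proposition~\ref{prop2}, or a limiting perturbation argument.) This gives a colourful simplex containing $\zero$ and using $s'_{d+1}$.

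\noindent
Finally I would iterate: having obtained, for a chosen point of $\S_{d+1}$, a colourful simplex containing $\zero$ and using that point, I repeat the construction for each of the $m=|\S_{d+1}|$ points of colour $d+1$ in turn. Each application pins colour $d+1$ to a distinct point, so the resulting $m$ colourful simplices are pairwise distinct (they differ in their colour-$(d+1)$ vertex). This produces $\min_i|\S_i|$ distinct colourful simplices containing $\zero$, completing the proof. The delicate point throughout is the degeneracy created by pinning colours to single points when forming the two-point configurations required by Proposition~\ref{prop2}; I expect this to be the step needing the most care, resolved either by the general-position assumption or by a suitable perturbation that preserves containment of $\zero$.
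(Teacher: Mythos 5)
Your proposal has a genuine gap, and in fact the intermediate goal you set yourself is false. You aim to produce, for \emph{each} of the $m=|\S_{d+1}|$ points of the smallest colour class, a colourful simplex containing $\zero$ that uses that point. No such statement follows from the hypothesis: take $d=1$, $\S_1=\{-1,-3\}$, $\S_2=\{1,-2\}$. There is a colourful simplex containing $\zero$ (indeed two, namely $[-1,1]$ and $[-3,1]$, matching the corollary), yet the point $-2\in\S_2$ lies in no colourful simplex containing $\zero$. Consequently the ``pinning'' device cannot work. Concretely, the configuration in which $\S^*_i$ is a duplicated singleton for $i\le d$ and $\S^*_{d+1}=\{s_{d+1},s'_{d+1}\}$ is not covered by Proposition~\ref{prop2} (whose proof assumes distinct points in general position), and its conclusion genuinely fails there: in the example above, pinning $\S^*_1=\{-1,-1\}$ and taking $\S^*_2=\{1,-2\}$ leaves only the simplices $[-1,1]\ni\zero$ and $[-2,-1]\not\ni\zero$. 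A perturbation does not rescue this: after perturbing the duplicated copies, Proposition~\ref{prop2} does give a second simplex, but it may differ from the first only by swapping in a perturbed copy in some colour $i\le d$, and as the perturbation tends to zero it collapses back onto the original simplex, so no new simplex survives the limit.

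The corollary does not require the $m$ simplices to differ in colour $d+1$; it only requires them to be pairwise distinct, and that is exactly what your first, discarded, construction delivers once the auxiliary second points are chosen coherently. This is the paper's proof: write $\S_i=\{v_i^1,v_i^2,\dots\}$ with $\conv(v_1^1,\dots,v_{d+1}^1)\ni\zero$, set $I=\min_i|\S_i|$, and for each $k=2,\dots,I$ apply Proposition~\ref{prop2} to $\S^*_i=\{v_i^1,v_i^k\}$ for \emph{all} colours $i$. The second simplex $S_k$ obtained for parameter $k$ is distinct from $S_1=\{v_1^1,\dots,v_{d+1}^1\}$, hence must use some vertex $v_i^k$ with superscript $k$; since for $k\neq k'$ the sets $\bigcup_i\{v_i^1,v_i^k\}$ and $\bigcup_i\{v_i^1,v_i^{k'}\}$ meet only in $\{v_1^1,\dots,v_{d+1}^1\}$, the simplices $S_1,S_2,\dots,S_I$ are pairwise distinct. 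So the correct fix is not to pin colours $1,\dots,d$ but to let every colour vary and to extract distinctness from the disjointness of the ``new'' points offered at different stages.
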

\begin{proof}
Let $I=\min_i|\S_i|$ and $\S_i=\{v_i^1,v_i^2,\dots\}$, and assume without loss of generality that the given colourful simplex containing {\zero} in its convex hull is $\conv(v_1^1,v_2^1,\ldots,v_{d+1}^1)$. 
Applying Proposition~\ref{prop2} $(I-1)$ times with $\S^*_i=\{v_i^1,v_i^k\}$ we obtain an additional distinct colourful simplex containing $\zero$ for each $k=2,\dots,I$.
\end{proof}
\noindent

The minimum number $\mu(d)$ of colourful simplices containing {\zero} for sets satisfying the condition of Theorem~\ref{thmB},  the general position assumption, and $|\S_i|\geq d+1$ for all $i$ was investigated in~\cite{BM06,DHST06,DSX11,ST06}. While it is conjectured that $\mu(d)= d^2+1$ for all $d \ge 1$, the best current upper and lower bounds are $d^2+1\ge\mu(d) \ge \left\lceil \frac{(d+1)^2}{2} \right\rceil$. In addition, we have $\mu(3)=10$ and $\mu(d)$ even for odd $d$. For sets satisfying  $|\S_i|\geq d+1$ for all $i$, one can consider the analogous quantities $\mu^{\diamond}(d)$, respectively $\mu^{\circ}(d)$, defined as the
minimum number of colourful simplices containing {\zero} for sets satisfying the condition
of Theorem~\ref{thmB+} and the general position assumption, respectively Theorem~\ref{thm2}. 
Since $\mu^{\circ}(d)\leq\mu^{\diamond}(d)$ and, as noted  in~\cite{CDSX11}, $\mu^{\diamond}(d)=d+1$, Theorem~\ref{thm2} and Corollary~\ref{atleast} imply that $\mu^{\circ}(d)=d+1$ for $d\geq 2$.
\subsection{Algorithm to find a colourful simplex}\label{Salgo}
We present an algorithm based on the proof of Proposition~\ref{prop2} finding a colourful simplex containing {\zero} for sets satisfying the conditions
of Theorem~\ref{thm2}, and, therefore, for sets satisfying the condition of Theorem~\ref{thm1} and the general position assumption. Note that the algorithm also finds a colourful simplex under the condition of Theorem~\ref{thmB+}. 
\subsection*{Algorithm:} Take any colourful $(d-1)$-simplex $\sigma$ whose vertices form, without loss of generality, a $\widehat{d+1}$-transversal $T=\{v_1,\ldots,v_d\}$,
and a ray $\vecr$ intersecting $\sigma$ in its interior. Let $H^+(T)$ be the open half-space delimited by $\aff(T)$ and containing $\zero$.
Check if there is a colourful $d$-simplex having $\sigma$ as a facet and either containing $\zero$ or having a facet $\tau$ intersecting $\vecr$ before $\sigma$. If there is none, we obtain $d$ new vertices forming a $\widehat{d+1}$-transversal $T'=\{v'_1,\ldots,v'_d\}$ in $H^+(T)$, see Section~\ref{Sproofsub}. 
Take a point $x\notin\bigcup_i\S_i$ in $\aff(\vecr)$ such that $\zero\in\conv(v_1,\ldots,v_d,x)$, and choose any point  $v'_{d+1}\in\S_{d+1}$. 
We can use Proposition~\ref{prop2} and its constructive proof with $\S^*_i=\{v_i,v'_i\}$ for $i=1,\ldots,d$ and $\S^*_{d+1}=\{x,v'_{d+1}\}$ to obtain a new colourful simplex containing $\zero$ with at least one vertex in $T'$.  If $v'_{d+1}$ is a vertex of the new simplex, we do have a colourful simplex containing $\zero$. Otherwise, the facet of the simplex not containing $x$ is a colourful $(d-1)$-simplex $\tau$ intersecting  $\vecr$ before $\sigma$ since $\aff(T)$ forms the boundary of $H^+(T)$.\\

\noindent
Given a colourful $(d-1)$-simplex $\sigma$ intersecting $\vecr$, the proposed algorithm finds either a colourful simplex containing $\zero$, or a colourful $(d-1)$-simplex $\tau$ intersecting $\vecr$ before $\sigma$.  Since there is a finite number of colourful $(d-1)$-simplices, the algorithm eventually finds a colourful simplex containing $\zero$. 
While non-proven to be polynomial, pivot-based algorithms, such as the B\'ar\'any-Onn ones or our algorithm, are typically efficient in practice.\\\\

\noindent{\bf Acknowledgments$\:$}
This work was supported by grants from NSERC, MITACS, and Fondation Sciences Math\a'ematiques de Paris, and by the Canada Research Chairs program. We are grateful to Sylvain Sorin and Michel Pocchiola for providing the environment that nurtured this work from the beginning. 

\bibliographystyle{amsalpha}
\bibliography{refs}

\end{document}